\documentclass[a4paper,oneside]{article}
\usepackage[margin=1in]{geometry}
\usepackage{todonotes}

\usepackage{amsthm,amssymb,amsmath,amsfonts}
\usepackage{tabularx,lipsum,environ,hyperref}
\usepackage{float}
\makeatletter
\newcommand{\problemtitle}[1]{\gdef\@probtitle{#1}}
\newcommand{\probleminput}[1]{\gdef\@probgiven{#1}}
\newcommand{\problemquestion}[1]{\gdef\@probfind{#1}}
\newcounter{problem}
\NewEnviron{problem}{
 \refstepcounter{problem}
  \problemtitle{}\probleminput{}\problemquestion{}
  \BODY
  \par\addvspace{.5\baselineskip}
  \noindent
  \begin{tabularx}{\textwidth}{@{\hspace{\parindent}} l X c}
    \multicolumn{2}{@{\hspace{\parindent}}l}{\@probtitle} \\
    \textbf{Given:} & \@probgiven \\
    \textbf{Find:} & \@probfind
  \end{tabularx}
  \par\addvspace{.5\baselineskip}
}
\makeatother

\makeatletter
\renewenvironment{proof}[1][\proofname]{%
  \par
  \pushQED{\qed}%
  \normalfont
  \topsep6\p@\@plus6\p@\relax
  \trivlist
  \item[\hskip\labelsep
        \itshape
    #1\@addpunct{.}]\mbox{}\newline
}{%
  \popQED
  \endtrivlist\@endpefalse
}
\makeatother

\newtheorem{theorem}{Theorem}[section]

\newtheorem{lemma}{Lemma}[section]

\newtheorem{thm}[theorem]{Theorem}

\newtheorem{prop}[theorem]{Proposition}

\def\rtwo{\mathbb{R}^2}
\def\planarized{^{\mathsf P}}

\DeclareMathOperator{\opt}{OPT}

\DeclareMathOperator{\etsp}{ETSP}

\DeclareMathOperator{\mst}{MST}

\DeclareMathOperator*{\argmin}{argmin}
\DeclareMathOperator{\copies}{copies}

\DeclareMathSymbol{\shortmathminus}{\mathbin}{AMSa}{"39}
\newcommand{\shortminus}{\mathbin{\!\shortmathminus\!}}

\newcommand{\setcond}[2]{\{ #1 \enspace | \enspace #2\}}
\newcommand{\set}[1]{\{ #1 \}}

\newcommand{\sequence}[1]{\langle #1 \rangle}

\usepackage{algorithm}
\usepackage[noend]{algpseudocode}

\usepackage{bm}
\usepackage{verbatim}
\usepackage{paralist}
\usepackage{tikz}
\usepackage{stmaryrd}
\usepackage{paralist}
\usepackage{subcaption}

\usepackage{authblk}

\title{%
  Polynomial-time $(k+\epsilon)$-approximation for $k$-coloured Non-crossing Euclidean TSP%
  \thanks{This research was supported by the Deutsche Forschungsgemeinschaft via the cluster of excellence PhenoRob.}
}

\author{%
  Daniel Bauer\thanks{University of Bonn, Email: \texttt{bauer@igg.uni-bonn.de}} \quad %
  Jan-Henrik Haunert\thanks{University of Bonn, Email: \texttt{haunert@igg.uni-bonn.de}}%
}
\begin{document}
\maketitle

\begin{abstract}
 Given a $k$-coloured point set $P\subseteq \rtwo$, the $k$-coloured Non-crossing Euclidean Travelling Salesperson Problem (short $k$-ETSP) asks for $k$ non-crossing closed curves, where one curve spans one corresponding colour class, such that the curves are pairwise non-crossing  and the sum of their Euclidean lengths is minimised. This problem is NP-hard as $1$-ETSP is the standard Euclidean Travelling Salesperson Problem. We present a polynomial-time $(k+\epsilon)$-approximation for $k$-ETSP.
\end{abstract}

\section{Introduction and related work}
Connecting within groups without causing interference to other group-connections is a widespread task. Major examples are found in cartography and very-large-scale integration (VLSI).
The latter deals with the design of micro-chip layouts, which has to follow many physical constraints, e.g., wires that carry different signals must be placed without obstruction. 

In cartography, related problems arise in the context of maps that show points of different categories, such as restaurants, museums, and bars. Current research deals with connecting points of equal category, e.g., by closed regions or line segments, to enhance the visual perception of the map, whilst minimising interference, e.g., overlap between regions or crossings between line segments \cite{efrat2014mapsets,van2024simplesets}. 

In this work, we focus on finding closed curves connecting points of equal colour whilst avoiding any crossings, namely the $k$-coloured Non-crossing Euclidean Travelling Salesperson Problem (short $k$-ETSP).

The Travelling Salesperson Problem (TSP) is an old fundamental problem in combinatorial optimisation. Some of its first mentions date back to the 17th century. Over time, TSP gained much theoretical and practical attention, developing it to a household name in combinatorial optimisation and computer science.
We refer to the standard literature on combinatorial optimisation \cite{korte2011traveling}, which was used as a source for the following paragraphs.

The TSP can be formulated as: Given a complete graph with non-negative edge-weights, find a minimum-weight cycle visiting every vertex in $G$ exactly once. Analytical results on TSP are predominantly negative; solving it is NP-hard, as is approximating it within a constant factor. 

A metric graph is a complete non-negative weighted graph such that any path $P$ from $v$ to $w$ is at least as long as the edge connecting $v$ and $w$. 
Metric TSP, in which only  metric graphs are considered, is strongly NP-hard and APX-hard. But it admits a $\frac{3}{2}$-approximation introduced by Christofides \cite{VANBEVERN2020118}. The algorithm computes for a given metric graph $G$ a minimum spanning tree $\mst(G)$ and a minimum-weight perfect matching $M$ of the odd vertices of $\mst(G)$ with respect to $G$.
The union of $\mst(G)$ and $M$ yields an Eulerian graph $G'$ of total weight less than $\frac{3}{2}$ times an optimal TSP tour in $G$. An Eulerian circuit of $G'$ is then used to obtain a spanning cycle of length less than the total length of $T$ and $J$.

The Euclidean Travelling Salesperson Problem (short ETSP) is further restricted to metric graphs
for which their vertices form a planar point set and each edge weight is the Euclidean distance of its endpoints. Naturally, ETSP found appeal in many practical applications.

Analogously to TSP and ETSP, other famous combinatorial optimisation problems are considered under restriction to Euclidean space. 
Given a point set $P$, the Euclidean Geometric Steiner Tree Problem (short ESMT) asks for a set of line segments that form a tree which spans $P$ and is of minimum length.
Arora~\cite{arora1998polynomial} famously introduced a $(1+\epsilon)$-approximation algorithm inter alia for ETSP and ESMT, therefore proving alongside that ETSP is not APX-hard in contrast to Metric TSP. Arora's algorithm subdivides the plane into multiple levels of grids. From there on, only certain candidate solutions are considered that intersect the grids in special points called portals. Via dynamic programming, a minimum length tour restricted to such portals can be computed in polynomial time, which is proven to be a $(1+\epsilon)$-approximation of an optimal solution of ETSP.    

Recent research considers Euclidean combinatorial optimisation problems where the aim is to compute multiple solution pieces that must be non-crossing to one another.
A current example is the $k$-coloured Non-crossing Euclidean Steiner Forest Problem (short $k$-CESF): Given a $k$-coloured point set $P$, find for each colour a Steiner tree of the corresponding points such that the Steiner trees are pair-wise non-crossing and the sum of their lengths is minimised. 

Recently, Bereg et al.~\cite{bereg2015colored} introduced a $(k+\epsilon)$-approximation for $k$-CESF, which extends on the $k\rho$-approximation (where $\rho$ is the Steiner ratio for which $\rho\leq \frac{2}{\sqrt{3}}$ is its best known upper bound) by Efrate et al.~\cite{efrat2014mapsets}. The $(k+\epsilon)$-approximation computes for points of same colour a $(1+\epsilon)$-approximation of their ESMT. Processing from shortest to longest Steiner tree, the currently processed tree is drawn such that if it intersects any previously processed tree, it is wrapped around it entirely. The result is then pruned and shortened to obtain feasible trees.

Even more recently, Baligács et al.~\cite{baliga} introduced a $(\frac{5}{3}+\epsilon)$-approximation for $k$-ETSP with $k=3$. They adapt Arora's idea to obtain their algorithm.

To our knowledge, no approximation algorithm for $k$-ETSP has been published. 
However, we note that the $(k+\epsilon)$-approximation for $k$-CESF gives rise to a $(2k+\epsilon')$-approximation for $k$-ETSP: 
Double all edges and compute for each colour the DFS-walk/ Eulerian circuit of its tree.

Our initial goal was to find a variegated definition of $k$-CESF or cases in which $k$-CESF admits a better approximation factor with respect to use-cases in cartography. We then shifted to adapting Christofides' approach to $k$-ETSP.
Our work culminated in a $(k+\epsilon)$-approximation which, to our knowledge, is the first approximation algorithm for $k$-ETSP for general $k$.

\section{Notation and definitions}
A \emph{multi-graph} $G$ is a three tuple $(V,E,\Psi)$ of a finite set $V$ called its vertices, a set $E$ called its edges and a function $\Psi: E \to \binom{V}{2}$. We call $\Psi(e)$ the endpoints of an edge $e\in E(G)$.
If two edges $e,e'$ share the same endpoints $\Psi(e)=\Psi(e')$, we say that $e$ and $e'$ are \emph{parallel}. 
While usually a graph is assumed to have no parallel edges,
we here use the term graph also for multi-graphs.

A \emph{geometric graph} $G$ is a graph that comes along with an embedding $\Gamma$ that maps each edge to a curve in the plane with endpoints $\Psi(e)$ and
whose vertices $V(G)\subset \rtwo$ form a planar point set. 
If the edges of $G$ are line segments, we call $G$ \emph{straight-edged}. If not stated otherwise, all geometric graphs will be assumed to be straight-edged.

The set of \emph{incident edges} of a vertex $v$ with respect to $G$ is written as $\delta_G(v)$ and its \emph{degree} is $|\delta_G(v)|$. If it is clear from the context, $G$ is omitted, i.e.\ $\delta(v)$.

Let $G$ be a geometric graph. Define for each $v\in V(G)$ the \emph{ordering} $\omega_v$ as a function that maps a pair $e,e'\in \delta(v)$ to the sequence $\omega_v(e,e')=\sequence{e,...,e'}$ of edges encountered when walking counter-clockwise around $v$ starting from $e$ and ending at $e'$. The function $\omega_v$ corresponds to a directed (counter-clockwise) cyclic order of $\delta(v)$. In case of parallel edges, that is ties in positions, we consider the cyclic order arbitrary but fixed within the following restriction: 
For all parallel edges $e_1,...,e_k$ with endpoints $v$ and $w$, it holds that if $\omega_v(e_1,e_k)=\sequence{e_1,...,e_k}$ then $\omega_w(e_k,e_1)=\sequence{e_k,...,e_1}$.
Therefore, $| \omega_v(e,e')|-1$ is the number of counter-clockwise steps (not including $e$) around $v$ starting from $e$ until $e'$ is first met.

Given a  straight-edged geometric graph $G$, two edges $e,f\in E(G)$ are called \emph{crossing} if they are not parallel and $(\Gamma(e)\cap \Gamma(f))\setminus (\Psi(e) \cup \Psi(f)) \neq \emptyset$. In words, they intersect in their interiors. The graph $G$ is then said to be non-crossing if its edge set is non-crossing.

Given a geometric graph $G$, the length $w(e)$ of an edge $e$, with $w\colon E(G)\to \mathbb{R}_{\geq 0}$, is the (Euclidean) length of its embedding $\Gamma(e)$. If $G$ is straight-edged, this coincides with the Euclidean distance of its endpoints. We define for abbreviation purposes $w(F)=\sum_{f\in F} w(f)$ for $F\subseteq E(G)$. The terms length and weight are used synonymously. 

We define a \emph{walk} $W$ as an alternating sequence $\sequence{v_0,e_1,v_1,e_2,...,e_k,v_k}$ of vertices and edges with $\Psi(e_i)=\set{v_{i-1},v_i}$.
Define $V(W):= \set{v_0,...,v_k}$ and $E(W):=\set{e_1,...,e_k}$. We refer to a walk as \emph{closed} if $v_0=v_k$. Commonly, a walk where $v_0,..., v_{k-1}$ are distinct is called a \emph{path}. A \emph{cycle} is a closed path.

A walk in a geometric graph is self-crossing or just crossing if either two edges $e,f\in E(W)$ cross 
or there exist two pairs of consecutive edges $e_i,e_{i+1}\in E(W)$ and $e_j,e_{j+1}\in E(W)$ incident to a common vertex $v$ such that $|\omega_v(e_i,e_j)|<|\omega_v(e_i,e_{i+1})|< |\omega_v(e_i,e_{j+1})|$ or $|\omega_v(e_i,e_{j+1})|<|\omega_v(e_i,e_{i+1})|< |\omega_v(e_i,e_{j})|$. We refer to this case as \emph{inner-node-intersection}.

Two edge-disjoint walks $W$ and $W'$ in a geometric graph $G$ are \emph{crossing} if either: 1.) there exist edges $e\in E(W)$ and $e'\in E(W')$ such that $e$ and $e'$ cross, or 2.) there exist consecutive edge-pairs $e_i,e_{i+1}\in E(W)$ and $e_j,e_{j+1}\in E(W')$ incident to a common vertex $v$ such that $|\omega_v(e_i,e_j)|<|\omega_v(e_i,e_{i+1})|< |\omega_v(e_i,e_{j+1})|$ or $|\omega_v(e_i,e_{j+1})|<|\omega_v(e_i,e_{i+1})|< |\omega_v(e_i,e_{j})|$.
A set of walks $\mathcal{W}$ is non-crossing if each $W\in \mathcal{W}$ is non-crossing and the walks in $\mathcal{W}$ are pair-wise non-crossing.
Let $G$ be a straight-edged graph. Its planarized graph $G\planarized$ is defined as $V(G\planarized):= V(G)\cup \setcond{p\in \rtwo}{\exists e,f\in E(G): \set{p}=\Gamma(e)\cap \Gamma(f)}$. For each edge $e'\in E(G)$, we add for each pair $p,q\in \Gamma(e')\cap V(G \planarized)$ an edge $e$ if $\forall z \in (\Gamma(e)\cap V(G \planarized)\setminus \set{p,q})\colon z\notin [p,q]$. We set $\Psi(e)=\set{p,q}$ and $\Gamma(e)=[p,q]$ for $G\planarized$. In words, we add a vertex for each intersection point between edges and split the edges accordingly. 

A graph $G$ is Eulerian if it is connected and there exists a closed walk in $G$ that visits each of its edges exactly once. This is equivalent to $G$ being connected and each of its vertices having an even degree \cite{korte2011graphs}. We call such walks Eulerian circuits.

A finite set of points $P\subset \rtwo$ in the plane is in \emph{general position}, if no three points of $P$ lie on a common line $\forall a \neq b\neq c\in P: c\notin \setcond{\lambda a +(1-\lambda)b }{\lambda \in \mathbb{R}}$.

A \emph{curve} is a continuous function $\varphi\colon [0,1]\to \rtwo $. It is called closed if $\varphi(0)=\varphi(1)$. A curve $\varphi$ is called (self)-intersecting if it is not injective on $(0,1)$ and $\varphi(0),\varphi(1)\notin \varphi((0,1))$.
Two curves $\varphi$ and $\gamma$ are said to intersect if their images intersect.
We define a \emph{tour} $T$ of points $P\subset \rtwo$ as a closed curve visiting all points in $P$.

A curve $\varphi$ is \emph{polygonal} if its image is partition-able into a finite set of line segments. 
Then $\varphi$ can be expressed as a walk in a straight-edged geometric graph.
A polygonal curve is \emph{non-crossing} if there exists a straight-edged geometric graph $G$ in which $\varphi$ corresponds to a non-crossing walk in $G$ that visits all its edges.
Using this notion, we can set two polygonal curves $\varphi$ and $\gamma$ in relation by defining their \emph{union} as the planarization $U\planarized$ of the graph $U$ with $V(U)=V(\varphi)\cup V(\gamma)$ and $E(U)=E(\varphi)\cup E(\gamma)$.
Then, two polygonal curves $\varphi$ and $\gamma$ are said to be non-crossing if there exist corresponding non-crossing walks in their union.
Analogously to a set of walks, a set of polygonal curves is defined to be non-crossing, if each curve is non-crossing and each pair of curves is non-crossing.

Let $\varphi$ be a tour of the points $a,b,c$ and $\varphi'$ be a closed curve that encloses the points $a,b,c$ and does not intersect $\varphi$.
If we were to minimise their total length whilst maintaining their topology, one approaches but never reaches the total length of two non-crossing polygonal tours; see \autoref{fig:limit-curve}. Therefore, if we define $k$-ETSP based on non-intersecting curves, its optimal value would not always be properly defined.

Vice versa, given a set of non-crossing polygonal tours, we can construct non-intersecting curves by drawing the polygonal curves with margin to one another whilst putting up with increased total length; see \autoref{fig:walks-to-curves}. 

\begin{figure}[t]
    \begin{subfigure}{0.48\textwidth}
        \centering
        \includegraphics[width=0.5\linewidth]{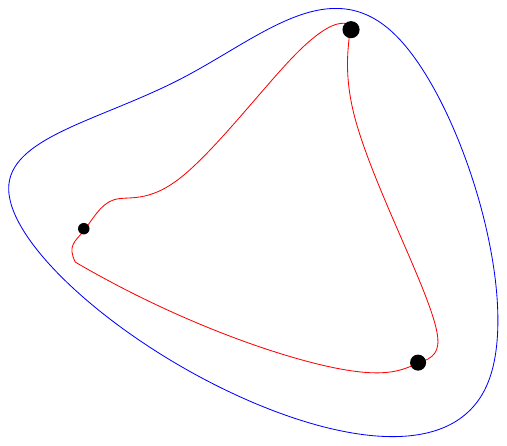}
        \caption{Two non-crossing curves: Red curve is a tour of three points; Blue curve is enclosing the red curve}
        \label{fig:limit-curve-a}
    \end{subfigure}
    \begin{subfigure}{0.48\textwidth}
        \centering
        \includegraphics[width=0.5\linewidth]{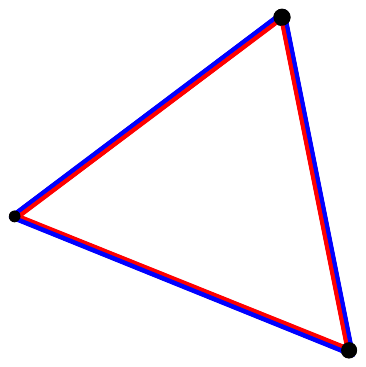}
        \caption{Limes of curves in graph.}
        \label{fig:limit-curve-b}
    \end{subfigure}
    
    \caption{The limit of closed curves can be depicted as closed walks in a graph.}
    \label{fig:limit-curve}
\end{figure}

\section{$(k+\epsilon)$-approximation for $k$-ETSP}
We now state our main problem.
\begin{problem}\label{prob:kctsp}
  \problemtitle{\textbf{$k$-coloured Non-crossing Euclidean Travelling Salesperson Problem ($k$-ETSP)}}
  \probleminput{Set of $k$ finite disjoint point sets in the plane $V_1,...,V_k$.}
  \problemquestion{Non-crossing set of polygonal tours $T_i$ for $V_i$ for all $1\leq i \leq k$ of minimum total length.}
\end{problem}
We state our algorithm:\bigskip\newline

\underline{\textbf{$(k+\epsilon)$-approximation for $k$-ETSP}}:
\begin{compactitem}[leftmargin=5pt]
    \item[Step 1:] For each $V_i$, compute a $(1+\frac{\epsilon}{k})$-approximation $T_i'$ for Euclidean TSP.
    \item[Step 2:] Define $G$ as the union of all $T_i'$. Planarize $G$ to $G\planarized$ and compute a non-crossing Eulerian circuit for each connected component of $G\planarized$.
    \item[Step 3:] 
    For each Eulerian circuit, create a
    copy for each colour in the respective connected component, as described in \autoref{lem:k-coloured-geo-euler}.
\end{compactitem}
\hspace{1.5cm}

In Step~1, we use Arora's PTAS for ETSP, for its approximation guarantee.
In Step~2, we apply a specialisation of Hierholzer's algorithm (see \cite{korte2011graphs}). The original algorithm computes an Eulerain circuit for an Eulerian graph $G$ in $\mathcal{O}(|E(G)|)$ time. 
Below, in \autoref{alg:geo-euler}, we modify it to obtain a non-crossing Eulerian circuit for a non-crossing Eulerian graph. In Step 3, we construct the final tours according to the constructive proof of \autoref{lem:k-coloured-geo-euler}. 
We note that one only has to compute the tours for the colours appearing in the respective connected component.
In a post-process, one can prune and shorten the tours as desired, for example, to avoid visiting vertices introduced by planarization. 

The algorithm's run-time is $\mathcal{O}(kn^ 3 (\log n)^{\mathcal{O}(\epsilon^{-1})})$, where $n:= \sum_{i=1}^k |V_i|$, as it is dominated by the use of the Arora's PTAS in Step~1, which runs in $\mathcal{O}(|V_i|^ 3 (\log |V_i|)^{\mathcal{O}(\epsilon^{-1}))} )$ for each $V_i$.

\begin{figure}[tb]
    \centering
    \begin{subfigure}{0.48\textwidth}
        \centering
        \includegraphics[width=0.6\linewidth]{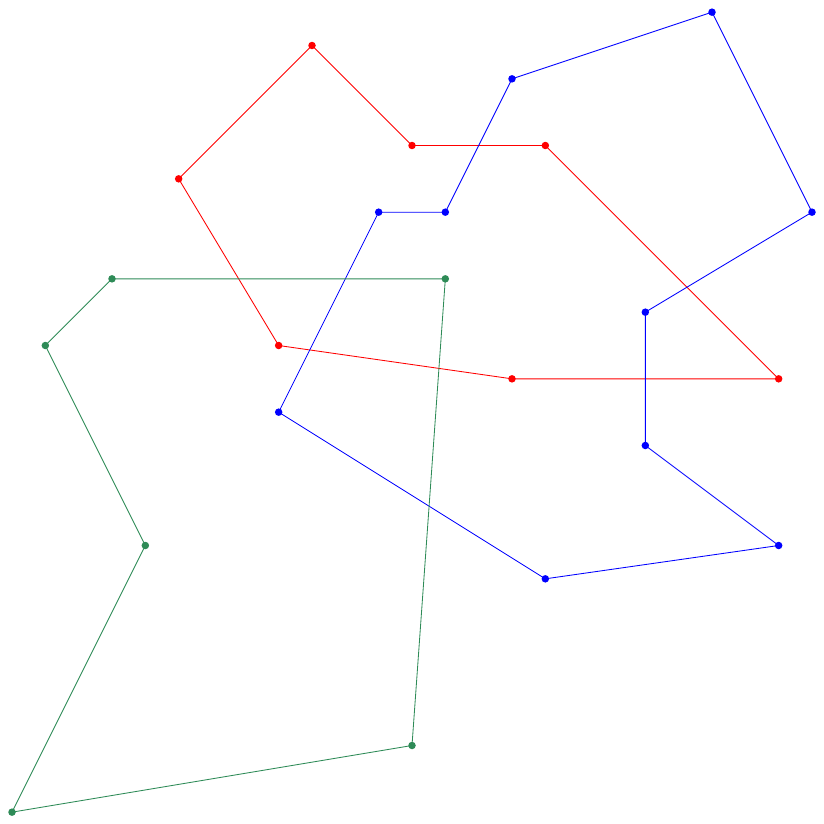}
        \caption{Union of ETSP approximations.}
        \label{fig:tsp-concat}
    \end{subfigure}
    \hfill
    \begin{subfigure}{0.48\textwidth}
        \centering
        \includegraphics[width=0.6\linewidth]{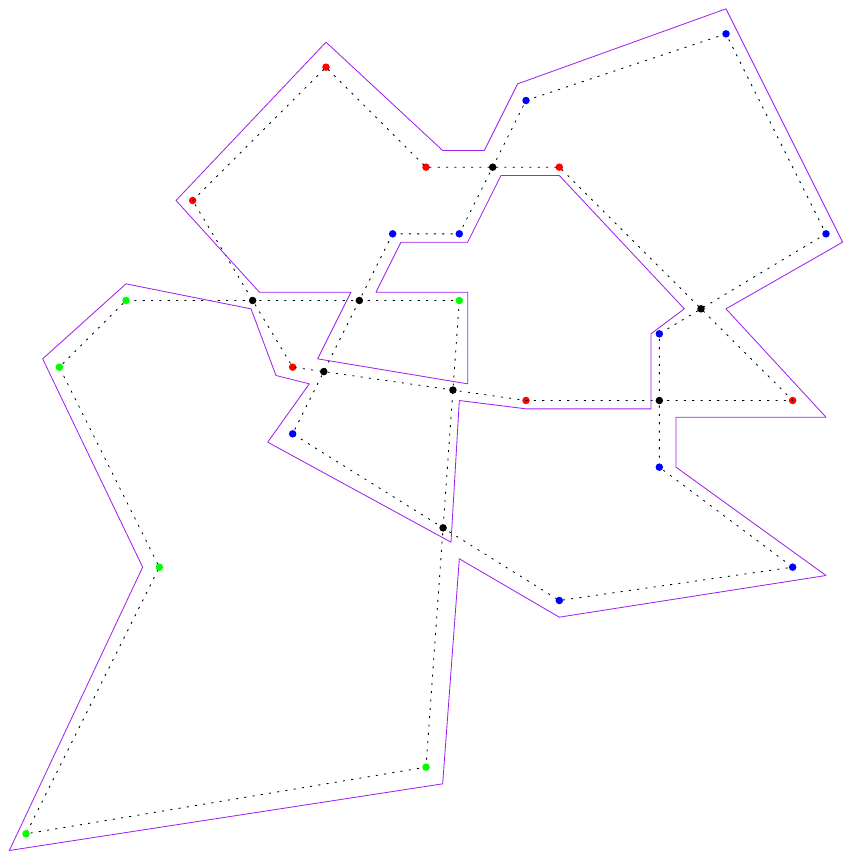}
        \caption{Non-crossing Eulerian circuit of planarized graph.}
        \label{fig:non-x-euler-graph}
    \end{subfigure}
    \caption{Illustration of the first two construction steps.}
    \label{fig:construction}
\end{figure}

\begin{algorithm}
\caption{Non-crossing Hierholzer }\label{alg:geo-euler}
\begin{algorithmic}[1]
    \State initialize $W$ as empty walk
    \While{$E(G)\setminus E(W) \neq \emptyset$}
    \If{ $E(W)=\emptyset$} 
        \State pick any $e\in E(G)$
        \State set $x,y$ equal to endpoints of $e$  
    \Else 
        \State choose $e\in E(G)\setminus E(W)$ with $\Psi(e)=\set{v,w}$ and $v$ is visited by $W$
        \State set $x=v$ and $y=w$
    \EndIf
    \State initialize $C=x,e,w$
    \While{ $x\neq y$}
        \State pick edge $f= \argmin\setcond{|\omega_y(e,f')|}{f'\in (\delta(y)\setminus E(W) )}$
        \State add $f$ to $C$
        \State set $y$ as the other endpoint of $f$
    \EndWhile
    \State let $e,e'$ with $W= \sequence{...,e,x,e',...}$ 
    \State substitute $x$ with $C$, set $W=\sequence{...,e,C,e',...}$
    \EndWhile
    \Return $W$
\end{algorithmic}
\end{algorithm}

\begin{figure}[tb]
    \centering
    \includegraphics[width=0.5\linewidth]{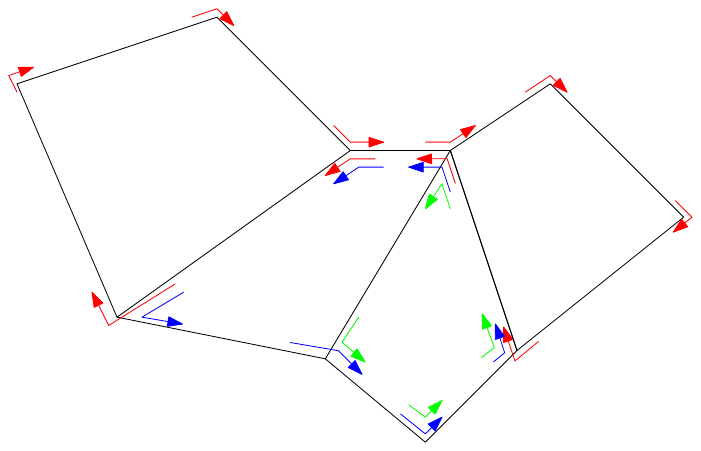}
    \caption{Non-crossing polygonal curves}
    \label{fig:placeholder}
\end{figure}

\begin{lemma}\label{lem:geo-euler}
 Let $G$ be a non-crossing Eulerian geometric graph. If for each vertex $v$ and $e\in\delta(v)$ we can retrieve $e$'s counter-clockwise consecutive edge in $\mathcal{O}(1)$, then we can construct a non-crossing Eulerian circuit $W$ of $G$ in $\mathcal{O}(|E(G)|)$.   
\end{lemma}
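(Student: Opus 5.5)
We prove the lemma by analysing \autoref{alg:geo-euler} (Non-crossing Hierholzer) in three parts: it terminates with an Eulerian circuit, the circuit is non-crossing, and the running time is linear. Correctness as an Eulerian circuit follows the classical Hierholzer argument. Every vertex has even degree, so whenever the inner loop enters a vertex $y\neq x$ via an unused edge, an odd number of the edges of $\delta(y)$ have been consumed. Hence an unused edge remains and the $\argmin$ is well defined. Consequently the inner loop can only stop at $x$, so $C$ is a closed walk on unused edges. Splicing $C$ into $W$ at an occurrence of $x$ preserves that $W$ is a closed walk. Connectivity of $G$ guarantees that, while unused edges exist, one of them is incident to a vertex already visited by $W$. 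So the outer loop terminates exactly when $E(W)=E(G)$.

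For non-crossingness, first observe that $G$ is non-crossing, so condition 1.) (two edges crossing in their interiors) can never occur. It remains to exclude inner-node-intersections. The key invariant concerns a vertex $v$ and the pairs of consecutive edges of $W$ at $v$, viewed as chords joining positions of the cyclic order $\omega_v$. We claim these chords are always pairwise non-interleaving (laminar). Two kinds of pairs arise.
\begin{itemize}
\item Pairs created inside a sub-walk $C$: we enter $y$ by $e$ and leave by the counter-clockwise-first unused edge $f$. The arc $\omega_y(e,f)$ then contains no unused edges strictly inside it. Any chord created later at $y$ uses only edges still unused at this moment, so it lies entirely outside this arc and cannot interleave with $\{e,f\}$.
\item Pairs created at a splice at $x$: the old pair $(e,e')$ of $W$ at $x$ is replaced by the pairs $(e,\text{first edge of }C)$ and $(\text{last edge of }C,e')$. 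Here I will argue that both new edges of $C$ lie within the region of $\omega_x$ left free by the earlier chords. Choosing the occurrence of $x$ in $W$ whose gap contains the start edge then makes the two new chords nest inside the old one.
\end{itemize}
Checking that all previously existing chords remain compatible under each of these operations is the main obstacle. Doing it requires a careful case analysis using the counter-clockwise minimality, and I expect the splicing step to need the most care.

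The order convention for parallel edges, $\omega_v(e_1,e_k)=\sequence{e_1,\dots,e_k}$ iff $\omega_w(e_k,e_1)=\sequence{e_k,\dots,e_1}$, ensures that parallel edges behave like infinitesimally separated curves. The laminarity argument therefore also applies at both endpoints of a bundle of parallel edges.

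For the running time, every edge is added to some $C$ exactly once. The $\argmin$ step needs a pointer walk from $e$ counter-clockwise to the next unused edge. To keep the total work linear, we maintain at each vertex a doubly linked cyclic list of the unused edges, obtained from the given $\mathcal{O}(1)$ successor access. Removing a used edge from its two endpoint lists costs $\mathcal{O}(1)$. The argmin query is then answered in $\mathcal{O}(1)$, provided the position of $e$ in the list is kept, for instance by splicing $e$ out only after its successor has been queried.

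We keep a stack or list of visited vertices that still have unused edges, so that the start edge in the else-branch is found in amortised $\mathcal{O}(1)$. Finally, we represent $W$ as a linked list, so each splice costs $\mathcal{O}(1)$. Summing over all edges yields total time $\mathcal{O}(|E(G)|)$.
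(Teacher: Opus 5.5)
You have a genuine gap, and it is the splice step you postpone. Your Hierholzer correctness argument and your counter-clockwise-minimality bullet (exactly the paper's two-case argument) are fine, and your running-time discussion goes beyond what the paper gives. But the splice step cannot be closed by a more careful case analysis. The transitions at the splice vertex $x$ (the closing pair of $C$ and the two spliced pairs) are not produced by the counter-clockwise rule, so your first bullet says nothing about them.

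With the start edge in Line~7 chosen arbitrarily, the algorithm can reach a state from which \emph{no} occurrence of $x$ and \emph{no} orientation of $C$ works. Take $x=(0,0)$, the triangles $L=x,(-1,1),(-1,-1)$ and $R=x,(1,1),(1,-1)$, and the $4$-cycle $D=x,(0,2),(-3,0),(0,-2)$. Counter-clockwise around $x$ the edges are $p,c,a,b,d,q$, pointing to $(1,1),(0,2),(-1,1),(-1,-1),(0,-2),(1,-1)$.
\begin{itemize}
\item Starting with $a$ gives $W=L$.
\item Starting the next sub-walk with $q$ gives $C=R$. The splice yields a non-crossing $W$ with transitions $\{b,q\}$ and $\{p,a\}$ at $x$; your procedure would accept it.
\item Now the unused edges $c$ and $d$ lie in different regions of the chord diagram at $x$: $c$ is cut off by $\{p,a\}$, and $d$ by $\{b,q\}$. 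So your intended claim that both end edges of $C=D$ lie in one free region is false.
\item All four splices (two occurrences, two orientations) create interleaving pairs: $\{b,c\}$ or $\{c,q\}$ against $\{p,a\}$, and $\{d,a\}$ or $\{p,d\}$ against $\{b,q\}$.
\end{itemize}
A non-crossing Eulerian circuit does exist: $R$, then $D$, then $L$ reversed, with transitions $\{p,c\},\{d,b\},\{a,q\}$. It changes transitions that $W$ had already fixed, which splicing never does.

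The paper's proof does not fill this hole either. It simply asserts that inserting $C$ into $W$ preserves non-crossingness. That assertion already fails on $L\cup R$ alone if $C=R$ is started with $p$, which produces the crossing figure-eight.

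So you need either a constrained rule for the start edge together with an invariant you actually prove, or a different mechanism. One clean route is to re-pair instead of splice.
\begin{itemize}
\item Pair rotation-consecutive edges $e_{2i-1},e_{2i}$ at every vertex. This decomposes $E(G)$ into pairwise non-crossing closed walks.
\item While at least two walks remain, connectivity gives a vertex $v$ where two of them meet.
\item The regions of the laminar chord diagram at $v$ form a tree whose edges are the chords. Hence some region is bounded by chords $\{e_1,e_2\}$ and $\{e_3,e_4\}$ of different walks, in this cyclic order.
\item Replace them by $\{e_2,e_3\}$ and $\{e_4,e_1\}$, both drawable inside that region. This keeps every vertex non-crossing and merges the two walks.
\end{itemize}
For the $\mathcal{O}(|E(G)|)$ bound, perform the merges along a spanning tree of the auxiliary graph whose nodes are the initial walks and whose edges join walks with rotation-consecutive transitions at a common vertex. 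At each vertex the merged transitions then always remain consecutive around the central region, so each merge costs $\mathcal{O}(1)$.
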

\begin{proof}
    As \autoref{alg:geo-euler} is a specialisation of Hierholzer's algorithm, its output is an Eulerian circuit.
    \newline $W$ is non-crossing: \newline
    As $G$ is non-crossing, no two edges cross. Therefore, we only have to consider inner-node-intersections.
    Consider Line 15. If $C$ and $W$ are non-crossing to themselves and each other, then inserting $C$ into $W$, as described, results in a non-crossing walk.
    We will now show that Line 11, during $C$'s construction, will not cause a crossing.
    Assume an inner-node-crossing at $v$ by a quadruple $e_{i},e_{i+1},e_j,e_{j+1}\in \delta(v)$ where the edges' indices $i,j$ coincide with their position in $W$ and $i+1<j$. 
    Assume that the pair $e_i,e_{i+1}$ was appended to $W$ or $C$ before $e_j,e_{j+1}$. 
    Arriving at $v$ over $e_j$, we consider the following two cases: \newline
    \textbf{Case 1.)}: $|\omega_v(e_i,e_j)|<|\omega_j(e_i,e_{i+1})|$\newline
    This is a contradiction to the choice of $e_{i+1}$.\newline
    \textbf{Case 2.)}: $|\omega_v(e_i,e_{j+1})|<|\omega_j(e_i,e_{i+1})|$\newline    
    This is also a contradiction to the choice of $e_{i+1}$. 
\end{proof}
The lemma above seems to be a known result even though we were not able to find a written proof of the statement, which is why we included it here.

\begin{prop}
Let $G$ be a geometric graph. Then
\begin{compactitem}
    \item[i.)] $w(E(G\planarized))=w(E(G))$.
    \item[ii.)] If $V(G)$ is in general position then $G\planarized$ has the same odd vertices as $G$.
    \item[iii.)] If $G$ has no odd vertices then $G\planarized$ also has no odd vertices. 
\end{compactitem}
\end{prop}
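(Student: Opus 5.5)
The plan is to analyse the planarization edge by edge and vertex by vertex. Every edge $e'\in E(G)$ is a straight segment, and in $G\planarized$ it is replaced by the segments $[p_0,p_1],[p_1,p_2],\dots,[p_{m-1},p_m]$. Here $p_0,p_m$ are the endpoints of $e'$ and $p_1,\dots,p_{m-1}$ are, in order along $\Gamma(e')$, the points of $V(G\planarized)$ lying in its interior.

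\textbf{Part i.)} Since all $p_\ell$ lie on one segment in monotone order, $\sum_\ell |p_{\ell}p_{\ell+1}| = |p_0p_m| = w(e')$. Every edge of $G\planarized$ arises from exactly one edge of $G$; this includes parallel edges of $G$, which give parallel pieces. So summing over $e'\in E(G)$ gives $w(E(G\planarized))=w(E(G))$.

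\textbf{Parity of old vertices.} First I would show that each old vertex $v\in V(G)$ keeps its degree. Each edge $e'$ incident to $v$ contributes exactly one subsegment incident to $v$, namely the first piece $[v,p_1]$. An edge not having $v$ as an endpoint but passing through $v$ would contribute two new incidences. That is an even number, so parity is preserved in any case; under general position it cannot happen at all.

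\textbf{Parity of new vertices.} Next let $p\in V(G\planarized)\setminus V(G)$ be an intersection point. Every edge $e'$ of $G$ whose segment contains $p$ contains it in its relative interior, since $p$ is not a vertex of $G$. Such an $e'$ is therefore split at $p$ into exactly two pieces incident to $p$. Hence $\deg_{G\planarized}(p)=2\cdot|\{e'\in E(G): p\in\Gamma(e')\}|$, which is even, counting parallel edges with multiplicity.

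\textbf{Parts ii.) and iii.)} Combining the two parity facts: new vertices are even, and old vertices keep their degree (or, without general position, change it by an even amount). This gives ii.), and also iii.). For iii.) I do not need general position: an edge through a vertex $v$ of $G$ that does not end at $v$ adds $2$ to $\deg(v)$, so even degrees stay even.

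\textbf{Main obstacle.} The main obstacle is being careful with the somewhat informal definition of $G\planarized$. New vertices are defined as single-point intersections $\{p\}=\Gamma(e)\cap\Gamma(f)$, so I must make sure of two things:
\begin{compactitem}
\item[(a)] An old vertex lying in the interior of another edge is handled. I would argue this either via general position (no vertex is interior to a segment between two other points) or via the even-contribution argument above.
\item[(b)] Collinear overlapping edges, whose intersection is a segment rather than a point, cause no trouble. Their endpoints are still split points, and the even-degree count still holds.
\end{compactitem}
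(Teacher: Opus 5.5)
Your proposal is correct and follows the paper's own degree-counting argument: a new vertex lying on $k$ edges of $G$ gets $2k$ incident pieces, and an old vertex changes degree only by $2k$ when $k$ edges pass through its interior, which general position rules out. You also spell out the length additivity for part i.) and the parallel and collinear-overlap cases, all of which the paper leaves implicit.
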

\begin{proof}
     Any new vertex $v\in V(G\planarized)\setminus V(G)$ is the intersection of $k$ edges in $E(G)$. After the 'split' those edges result in $2k$ $v$-incident edges. 
     If $V(G)$ is not in general position, an intersection point $p$ of $k$ edges can be equal to a vertex $v$ in $V(G)$. 
     Then $|\delta_{G\planarized}(v)|=|\delta_G(v)|+2k$.
\end{proof}

\begin{lemma}\label{lem:k-coloured-geo-euler}
    Let $G$ be a $k$-coloured non-crossing Eulerian geometric graph. 
    Then we can construct $k$ non-crossing polygonal tours, one corresponding to each colour, each of length $w(E(G))$.
\end{lemma}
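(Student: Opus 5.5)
The plan is to take one non-crossing Eulerian circuit of $G$ and reuse it $k$ times, once per colour, in a way that keeps the $k$ resulting walks pairwise non-crossing.

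First, apply \autoref{lem:geo-euler} to $G$ (which is non-crossing and Eulerian) to obtain a non-crossing Eulerian circuit $W=\sequence{v_0,e_1,v_1,\dots,e_m,v_m}$. Since $W$ traverses every edge of $G$ once, it visits every vertex of $G$, and in particular every point of every colour class; its length is $w(E(G))$.

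Second, build an auxiliary multigraph $H$ with vertex set $V(G)$ that contains $k$ parallel copies $e^1,\dots,e^k$ of each edge $e\in E(G)$, all embedded as $\Gamma(e)$. Let $W_c$ be the walk obtained from $W$ by replacing each $e_i$ with $e_i^c$. Each $W_c$ is then a closed walk through all points of $V_c$ of length $w(E(G))$, and it corresponds to a polygonal tour $T_c$. The cyclic orders $\omega_v$ of $H$ are fixed so that they respect the parallel-edge restriction from the definitions. At every vertex $v$, each edge $e\in\delta_G(v)$ is replaced by the block $e^1,\dots,e^k$ in counter-clockwise order at its tail. By that restriction, the same block then appears in reversed order at the other endpoint. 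The ordering is therefore consistent, like a bundle of $k$ parallel lanes.

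Third, and this is the main obstacle, show that the family $\{W_c\}$ is non-crossing. Distinct copies never cross in their interiors because parallel edges are excluded from the crossing definition, and distinct edges of $G$ do not cross because $G$ is non-crossing. So only inner-node-intersections remain. Fix a vertex $v$ and a passage of $W$ through $v$ via consecutive edges $e,f$. In $W_c$ this passage becomes $e^c,f^c$. The idea is that the lanes are nested: lane $c$ uses the $c$-th edge of the $e$-block and the $(k+1-c)$-th edge of the $f$-block, read counter-clockwise, because the orientation flips along an edge. Hence the chords $\{e^c,f^c\}$ for $c=1,\dots,k$ are mutually nested rather than interleaved in the cyclic order at $v$. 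If this nesting fails for the chosen convention, I would instead let lane $c$ use index $c$ or $k+1-c$ depending on the direction of traversal, and check against the reversal rule. Two passages of the same colour, or of different colours, through $v$ via distinct pairs $(e,f)$ and $(e',f')$ do not interleave because $W$ is non-crossing. Since each block is contiguous in the cyclic order, interleaving of two copies would imply interleaving of the original pairs. Passages that share an edge of $G$ occur only when $e=e'$ and $f\ne f'$ with the edges used in opposite directions. This case needs a separate check, and I expect the nested-lane argument to handle it because adjacent blocks are contiguous.

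Finally, each $W_c$ is non-crossing by the same block argument, and each $T_c$ has length exactly $w(E(G))$. Restricting attention to the colours present completes the construction.
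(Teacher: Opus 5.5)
This is essentially the paper's proof. You take the non-crossing Eulerian circuit $W$ from \autoref{lem:geo-euler} and replace every edge by a contiguous block of $k$ parallel copies, indexed counter-clockwise at the tail of its traversal and hence reversed at the head. Colour $c$ uses the $c$-th copy throughout. You then argue that the $k$ chords coming from one passage of $W$ through a vertex are nested, while chords from distinct passages cannot interleave because the blocks are contiguous and $W$ is non-crossing.

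The extra case you leave open (passages sharing an edge with $f\neq f'$) is actually empty. $W$ traverses each edge exactly once and there are no loops, so distinct passages of $W$ through $v$ use disjoint edge pairs; the paper uses this tacitly too. Also, with your tail convention, lane $c$ really takes the $(k+1-c)$-th copy of the incoming block and the $c$-th copy of the outgoing one, which is harmless since only the reversal matters for nesting.
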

\begin{proof}
    We first construct each tour, then proof that they form a non-crossing set.
     Let $W$ be the non-crossing Eulerian circuit given by \autoref{alg:geo-euler}. Define $\overline{G}$ with $V(\overline{G})=V(G)$.
     For each $e\in E(G)$ define $\copies(e)=\set{\overline{e}_1,...,\overline{e}_k}$ with $\overline{\Gamma}(\overline{e}_i)=\Gamma(e)$, $\overline{\Psi}(\overline{e}_i)=\Psi(e)$.
    Add $ \copies(e)$ to $E(\overline{G})$ for each $e\in E(G)$.     
     We maintain any cyclic order $\omega_v(f,g)=\sequence{f,...,e,...,g}$ of $\delta_G(v)$ (for some $f,g\in E(G)$) in $\overline{G}$ by substituting each $e$ with its $\copies(e)=\set{\overline{e}_1,...,\overline{e}_k}$ in place i.e. $\overline{\omega}_v(\overline{f}_i,\overline{g}_j)=\sequence{\overline{f}_i,...,\overline{e}_1,...,\overline{e}_k,...,\overline{g}_j}$ with $\overline{f}_i\in \copies(f)$ and $\overline{g}_j\in \copies(g)$.
    
     Initialize empty walks $T_1,...,T_k$. For each $e_j\in E(W)$, where $W=\sequence{...,v_{j-1},e_j,v_j,...}$, we proceed as follows for $j=1$ to $k$.
     Let $\copies(e_j)=\set{\overline{e}_1,...,\overline{e}_k}$. Assume without loss of generality that their indices are consistent with the cyclic order of $\delta(v_{j-1})$, that is
     $\omega_{v_{j-1}}(\overline{e}_1,\overline{e}_k)=\sequence{\overline{e}_1,...,\overline{e}_k}$.
     Add $\overline{e}_i$ to $T_i$ for each $1\leq i \leq k$.
     
     We now prove that $T_1,...,T_k$ are pair-wise non-crossing. Each tour is non-crossing as $W$ is non-crossing.
     We show that each $v_l\in V(\overline{G})$ does not admit an inner-node-intersection for any $T$ and $T'$.
     Let $\overline{e},\overline{f}\in E(T)$ with $T=\sequence{...,\overline{e},v_l,\overline{f},...}$ and $\overline{e}',\overline{f}'\in E(T')$ with $T'=\sequence{...,\overline{e}',v_l,\overline{f}',...}$.

     Assume $\overline{e},\overline{e}',\overline{f},\overline{f}'$ arise as copies from four distinct edges in $E(G)$. If they form an inner-node-intersection at $v_l$ then $W$ would admit an inner-node-intersection as well.

     As $\overline{e},\overline{f}$ are not copies of the same edge (same holds for $\overline{e}',\overline{f}'$), we are left to consider the case $\overline{e},\overline{e}'\in \copies(e)$ and $\overline{f},\overline{f}'\in \copies(f)$ for some $e,f \in E(G)$. Consider $e$ and $f$ fixed from now.
     We now have $T=\sequence{...,v_{l-1},\overline{e},v_l,\overline{f},v_{l+1},...}$ and $T'=\sequence{...,v_{l-1},\overline{e}',v_l,\overline{f}',v_{l+1},...}$
     
     Consider $\copies(f)=\set{\overline{f}_1,...,\overline{f}_k}$ and $\copies(e)=\set{\overline{e}_1,...,\overline{e}_k}$.
     Without loss of generality, assume $\omega_{v_{l-1}}(\overline{e}_1,\overline{e}_k)=\sequence{\overline{e}_1,...,\overline{e}_k}$ and $\omega_{v_{l}}(\overline{f}_1,\overline{f}_k)=\sequence{\overline{f}_1,...,\overline{f}_k}$.  By the consistency between $\omega_{v_{l-1}}$ and $\omega_{v_{l}}$, it follows that $\omega_{v_{l}}(\overline{e}_k,\overline{e}_1)=\sequence{\overline{e}_k,...,\overline{e}_1}$.
     Assume $\overline{e}=\overline{e}_i$ and $\overline{e}'=\overline{e}_j$. Then by construction $\overline{f}=\overline{f}_i$ and $\overline{f}'=\overline{f}_j$, in particular $T=T_i$ and $T'=T_j$.
     \newline
     \textbf{Case 1.)}: $i<j$ \newline
        Then  $\omega_{v_l}(\overline{e},\overline{e}')=\sequence{\overline{e},...,\overline{f},...,\overline{f}',...,\overline{e}'}=
        \sequence{\overline{e}_i,....\overline{f}_i,...,\overline{f}_j,...,\overline{e}_j}$.\newline
      \textbf{Case 2.)}: $j<i$ \newline
     Then $\omega_{v_l}(\overline{e},\overline{e}')=\sequence{\overline{e},...,\overline{e}'}=\sequence{\overline{e}_i,...,\overline{e}_j}$ or equivalently formulated
     $\omega_{v_l}(\overline{e}',\overline{e})=\sequence{\overline{e}_j,...,\overline{f}_j,...,\overline{f}_i,...,\overline{e}_i}$.
\end{proof}

As mentioned before, we can obtain non-intersecting tours from our non-crossing polygonal tours.
Processing along $W$, draw all polygonal tours in parallel and slightly margined to each other where $\varphi_i$ is enclosed by $\varphi_j$ if $j>i$; see the construction in the proof of \autoref{lem:k-coloured-geo-euler}.
Around a vertex $v$ of colour $i$, if it is first met by $W$, draw $\varphi_i$ through it. If $j>i$ then draw $\varphi_j$ 'above' $v$ else draw it 'below' $v$.
If $v$ is encountered again, draw the connection of each curve $\varphi_1,...,\varphi_k$ to the next edge around $v$; see \autoref{fig:circuit-to-curve} and \ref{fig:copies-to-curves}.

\begin{figure}[tb]
     \centering
     \begin{subfigure}{0.48\textwidth}
        \centering
        \includegraphics[width=\linewidth]{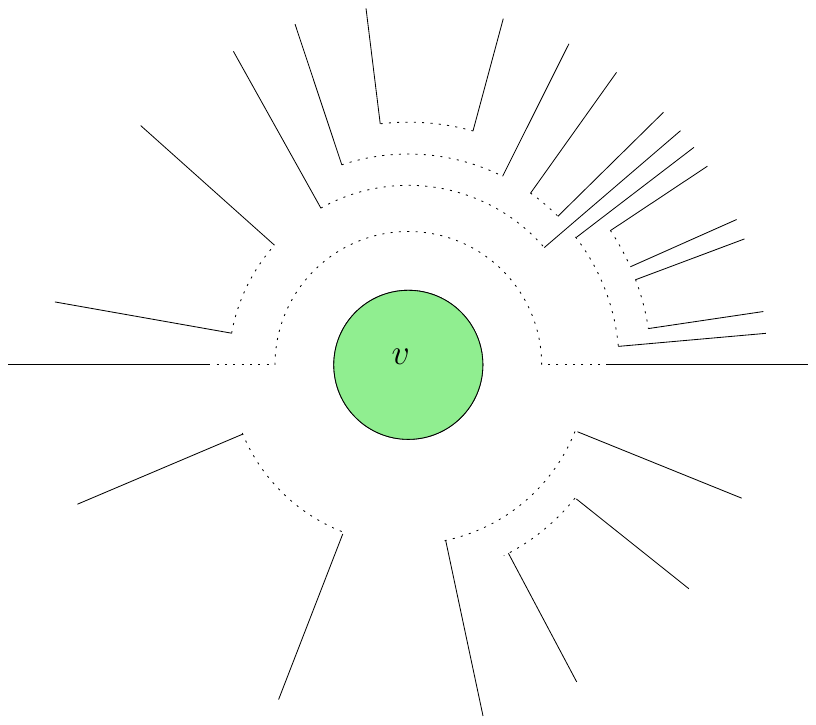}
        \caption{ We transform an non-crossing Eulerian circuit to a non-intersecting closed curve.
        Consecutive edge pairs in the circuit are connected via dotted lines. }
     \label{fig:circuit-to-curve}
     \end{subfigure}
     \begin{subfigure}{0.48\textwidth}
            \includegraphics[width=\linewidth]{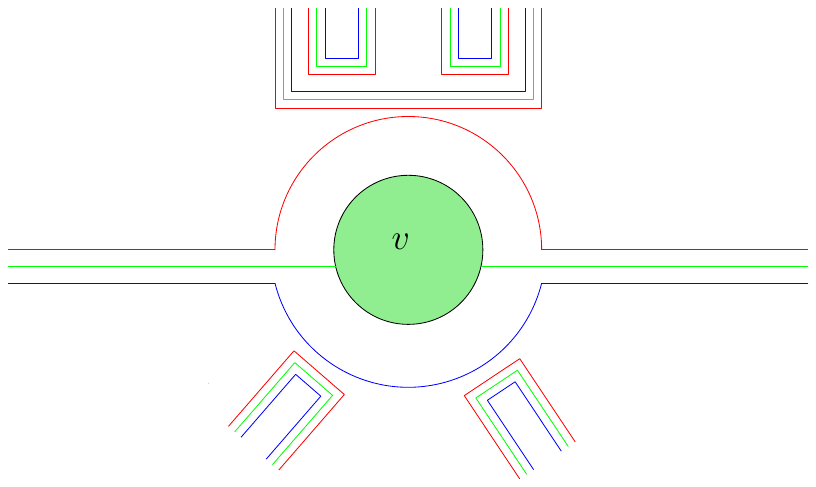}
            \caption{How one can retrieve non-intersecting tours from pair-wise non-crossing polygonal tours}
                 \label{fig:copies-to-curves}
     \end{subfigure}
     \caption{From non-crossing walks to non-intersecting curves.}
     \label{fig:walks-to-curves}
 \end{figure}

Let $k\shortminus \etsp(V_1,...,V_k)$ denote an instance of $k$-ETSP for disjoint point sets $V_1,...,V_k$, analogously, $\etsp(V)$ is defined for a point set $V$. Let $\opt(\mathcal{I})$ denote the value of an optimal solution of a problem instance $\mathcal{I}$.
\setcounter{theorem}{0}
\begin{thm}
    The algorithm's output $T_1,...,T_k$ is a $(k+\epsilon)$-approximation of $\opt(k\shortminus \etsp(V_1,...,V_k))$.
\end{thm}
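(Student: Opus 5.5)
The plan is to bound the total length of the output tours by comparing each piece to the optimum, in three steps.

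\textbf{Step 1: a lower bound on the optimum.} Let $T_1^*,\dots,T_k^*$ be an optimal solution of $k\shortminus\etsp(V_1,\dots,V_k)$. Each $T_i^*$ is in particular a polygonal tour of $V_i$, since dropping the non-crossing constraint only relaxes the problem. Hence $w(T_i^*)\ge \opt(\etsp(V_i))$, and summing over $i$ gives
\[
\sum_{i=1}^k \opt(\etsp(V_i)) \le \opt(k\shortminus\etsp(V_1,\dots,V_k)).
\]
Step~1 of the algorithm yields tours with $w(T_i')\le (1+\frac{\epsilon}{k})\opt(\etsp(V_i))$. Therefore the union $G$ satisfies
\[
w(E(G))=\sum_i w(T_i') \le \left(1+\tfrac{\epsilon}{k}\right)\opt(k\shortminus\etsp(V_1,\dots,V_k)).
\]

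\textbf{Step 2: validity of the construction.} The union $G$ of closed tours has only even-degree vertices. By part iii.) of the Proposition, $G\planarized$ has no odd vertices, and by part i.) its total weight equals $w(E(G))$. The planarized graph is non-crossing, and each of its connected components is therefore a non-crossing Eulerian geometric graph. So \autoref{lem:geo-euler} gives a non-crossing Eulerian circuit per component, and \autoref{lem:k-coloured-geo-euler} gives, per component $H$, non-crossing tours, one for each colour present in $H$, each of length $w(E(H))$.

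Each $T_i'$ is connected, so all of $V_i$ lies in a single component. The tour of colour $i$ in that component therefore visits all of $V_i$, and it is the only tour of colour $i$ we output. I must also argue that tours living in different components do not cross. They share no vertices or edges of $G\planarized$, and after planarization any geometric intersection would have become a vertex. So there is nothing to cross, and the output is a feasible non-crossing set.

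\textbf{Step 3: the length bound.} Each colour's tour has length $w(E(H))\le w(E(G\planarized))=w(E(G))$, so the total is at most $k\,w(E(G))\le (k+\epsilon)\opt(k\shortminus\etsp(V_1,\dots,V_k))$. If the components are taken into account, the bound is even $\sum_H c_H w(E(H))$, where $c_H\le k$ is the number of colours appearing in $H$.

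\textbf{Expected obstacle.} The delicate point is Step 2's feasibility argument, not the arithmetic. One must check that tours from distinct components are non-crossing, and that the planarization vertices, where a non-crossing walk is defined via the cyclic orders, are consistent with the definition of non-crossing polygonal curves through their union. I would handle this by noting that the union of all output tours is a subgraph of $G\planarized$ with parallel copies, which is exactly the graph $\overline{G}$ of \autoref{lem:k-coloured-geo-euler} taken componentwise. Non-crossingness then follows from that lemma, together with the absence of shared vertices between components.
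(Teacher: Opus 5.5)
Your proposal is correct and follows the paper's argument: you combine the relaxation bound $\opt(k\shortminus\etsp(V_1,\dots,V_k))\ge\sum_i\opt(\etsp(V_i))$ with the fact that each of the $k$ output tours has length at most $w(E(G))\le(1+\frac{\epsilon}{k})\sum_i\opt(\etsp(V_i))$. Your Step~2 feasibility check makes explicit what the paper's proof leaves implicit in its lemmas and algorithm description, namely that $V_i$ lies in one component because $T_i'$ is connected, and that tours in distinct components cannot meet.
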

\begin{proof}
    We have $\opt(k\shortminus\etsp(V_1,...,V_k)) \geq \sum_{i=1}^k \opt(\etsp(V_i))$. It follows 
    \begin{align*}
        \sum_{i=1}^{k} w(E(T_i)) & \leq k(1+\frac{\epsilon}{k})(\sum_{i=1}^k\opt(\etsp(V_i))) \\
            & \leq (k+\epsilon)\opt(k\shortminus\etsp(V_1,...,V_k))
    \end{align*}
\end{proof}
\section{Acknowledgement}
The authors thank Marena Richter for her helpful feedback.
\bibliographystyle{plainurl}
\bibliography{ref}

@inproceedings{baliga,
  author =	{Balig\'{a}cs, J\'{u}lia and Disser, Yann and Feldmann, Andreas Emil and Zych-Pawlewicz, Anna},
  title =	{{A ($5/3+\epsilon$)-Approximation for Tricolored Non-Crossing Euclidean TSP}},
  booktitle =	{32nd Annual European Symposium on Algorithms (ESA 2024)},
  pages =	{15:1--15:15},
  series ={Leibniz International Proceedings in Informatics (LIPIcs)},
  ISBN =	{978-3-95977-338-6},
  ISSN =	{1868-8969},
  year =	{2024},
  volume =	{308},
  editor =	{Chan, Timothy and Fischer, Johannes and Iacono, John and Herman, Grzegorz},
  publisher =	{Schloss Dagstuhl -- Leibniz-Zentrum f{\"u}r Informatik},
  address =	{Dagstuhl, Germany},
  URL =		{https://drops.dagstuhl.de/entities/document/10.4230/LIPIcs.ESA.2024.15},
  URN =		{urn:nbn:de:0030-drops-210862},
  doi =		{10.4230/LIPIcs.ESA.2024.15},
    eprint= {2402.13938},
    eprinttype  = {arxiv}
}

@article{arora1998polynomial,
  title={Polynomial time approximation schemes for {E}uclidean traveling salesman and other geometric problems},
  author={Arora, Sanjeev},
  journal={Journal of the ACM (JACM)},
  volume={45},
  number={5},
  pages={753--782},
  year={1998},
  publisher={ACM New York, NY, USA}
}

@inproceedings{bereg2015colored,
  title={Colored non-crossing {E}uclidean {S}teiner forest},
  author={Bereg, Sergey and Fleszar, Krzysztof and Kindermann, Philipp and Pupyrev, Sergey and Spoerhase, Joachim and Wolff, Alexander},
  booktitle={International Symposium on Algorithms and Computation},
  pages={429--441},
  year={2015},
  organization={Springer}
}

@incollection{korte2011traveling,
  title={The {T}raveling {S}alesman {P}roblem},
  author={Korte, Bernhard and Vygen, Jens},
  booktitle={Combinatorial Optimization: Theory and Algorithms},
  pages={557--592},
  year={2011},
  publisher={Springer}
}

@inproceedings{efrat2014mapsets,
  title={MapSets: Visualizing embedded and clustered graphs},
  author={Efrat, Alon and Hu, Yifan and Kobourov, Stephen G and Pupyrev, Sergey},
  booktitle={International Symposium on Graph Drawing},
  pages={452--463},
  year={2014},
  organization={Springer}
}

@incollection{korte2011graphs,
  title={Graphs},
  author={Korte, Bernhard and Vygen, Jens},
  booktitle={Combinatorial Optimization: Theory and Algorithms},
  pages={15--54},
  year={2011},
  publisher={Springer}
}

@article{VANBEVERN2020118,
    title = {A historical note on the 3/2-approximation algorithm for the metric traveling salesman problem},
    journal = {Historia Mathematica},
    volume = {53},
    pages = {118-127},
    year = {2020},
    issn = {0315-0860},
    doi = {https://doi.org/10.1016/j.hm.2020.04.003},
    url = {https://www.sciencedirect.com/science/article/pii/S0315086020300240},
    author = {René {van Bevern} and Viktoriia A. Slugina}
}

@article{van2024simplesets,
  title={SimpleSets: Capturing Categorical Point Patterns with Simple Shapes},
  author={van den Broek, Steven and Meulemans, Wouter and Speckmann, Bettina},
  journal={IEEE Transactions on Visualization and Computer Graphics},
  volume={31},
  number={1},
  pages={262--271},
  year={2024},
  publisher={IEEE}
}

\end{document}